\documentclass[conference]{IEEEtran}

\usepackage{amsthm,amssymb,graphicx,multirow,amsmath,color,amsfonts}
\usepackage[update,prepend]{epstopdf}
\usepackage[noadjust]{cite}
\usepackage[latin1]{inputenc}
\usepackage{tikz}
\usepackage{bbm} 
\usepackage{pdfpages}
\usepackage{tabulary}
\usepackage{multirow}
\usepackage{comment}
\usepackage{textcomp}



\begin{document}
\title{
Cross-layer Interference Modeling for 5G MmWave Networks in the Presence of Blockage}
\author{\IEEEauthorblockN{Solmaz Niknam, Reza Barazideh, and Balasubramaniam Natarajan\\}
\IEEEauthorblockA{Department of Electrical and Computer Engineering\\
Kansas State University, Manhattan, KS, 66506 USA\\
Email: \{slmzniknam, rezabarazideh, bala\}@ksu.edu\\}
}

\maketitle

\begin{abstract}
Fifth generation (5G) wireless technology is expected to utilize highly directive antennas at millimeter wave (mmWave) spectrum to offer higher data rates. However, given the high directivity of antennas and adverse propagation characteristics at mmWave frequencies, these signals are very susceptible to obstacles. One of the important factors that are highly impacted is interference behavior. In fact, signals received from other terminals can be easily blocked or attenuated at the receiver. In addition, higher number of terminals can transmit signals without introducing much interference and hence the traffic behavior, maintained by medium access control (MAC) layer,  may change.
In this paper, we provide an interference model to evaluate the interference power received at the physical layer of the receiving terminal, considering antenna directivity, effect of obstacles and MAC layer constraints that control the number of terminals transmitting simultaneously. We first develop a blockage model and then derive the Laplace transform of the interference power received at a typical receiving node. Subsequently, using the derived Laplace transform, we evaluate the network error performance using average bit-error-rate (BER). Analytical results are validated via Monte-Carlo simulations.
\end{abstract}

\section{Introduction} \label{sec:intro}
Utilization of millimeter wave (mmWave) spectrum in the range of 30--300~GHz is regarded as a prospective option for future fifth generation (5G) systems. However, mmWave signals suffer from severe pathloss and strong atmospheric absorption. Therefore, highly directional transmissions are necessary for communication in these frequencies~\cite{Andrew2014what}. Using directional antennas has posed multiple challenges in different aspects of 5G communication. In fact, mmWave signals are highly sensitive to blockages. The sensitivity to obstacles in turn impacts the interference behavior~\cite{Niknam2017Spatial-Spectral,Solmaz2017Finite}.
In addition, it is clear that the interference phenomenon happens at the physical layer of the receiving node. However, the interference signal and its undesired effects are impacted by features of the interfering nodes at different network layers. 
Network operation and traffic behavior that describe the transmitter activity and the interrelation among terminals are maintained by the medium access control (MAC) protocols. Therefore, an efficient and comprehensive interference model for mmWave applications must capture both mmWave physical layer specifications and MAC layers constraints. Such cross-layer models can guide the design and development of interference coordination and management schemes~\cite{Agival2016survey5G,Amiri2018self}.

There have been couple of candidate MAC layer protocols for 5G mmWave applications. Among them, multisuer MAC protocols that are based on directional carrier sense multiple access with collision avoidance (D-CSMA/CA) have attracted attention~\cite{Agival2016survey5G}. Such protocols effectively increase the network capacity by exploiting the spatial features.
%
In networks with sensing mechanism, the spatial distribution of active (simultaneously transmitting) transmitters is typically modeled as Matern point process (MPP)~\cite{Haenggi2012stochBook}. MPP can be viewed as a thinned version of Poisson point process (PPP) where it considers an exclusion area (circular area with radius proportional to the sensing threshold of the transmitter) around each node, and all nodes closer than a certain distance are excluded.
In fact, in MPP networks, active nodes are separated by a specific range from each other. However, in real scenarios in mmWave networks with directional signals (that can be easily blocked by obstacles), two transmitters can be close to each other and transmit in two different directions without introducing interference to each other. In addition, they may not be interfering with each other due to the presence of obstacles in between. Therefore, considering antenna directionality and blockage effect, not all the nodes in the circular exclusion area should be eliminated from the point process. 
There have been several prior works on modeling CSMA/CA networks using MPP~\cite{MAC2007Baccelli,CSMA2014Alfano} or modified versions of MPP~\cite{busson2014capacity,lei2016modified,tong2015stochastic,CSMA2013ElSawy}. However, non of them consider the directionality of the antennas which makes them unsuitable for mmWave applications. Although majority of works on the performance evaluation of D-CSMA/CA networks are via simulations, authors in~\cite{DCSMA2010Bazan,Dir2003Wang,DIRECT2006Hsu,Adhoc2006Carvalho} have conducted analytical evaluation of networks with directional MAC protocol. However, blockage effect resulting from high directionality of mmWave signals are not taken into account. Moreover, spatial distribution of access points (APs) and random orientation of the antennas are not considered in~\cite{DIRECT2006Hsu,Adhoc2006Carvalho}, as well.

In this paper, we propose a cross-layer interference model that considers both directionality of mmWave signals with random antenna orientation and blockage effect from both physical and MAC layer perspective. Using tools from stochastic geometry, we model the spatial distribution of APs and blockages as Poisson processes. Transmitting nodes are equipped with very directional antennas towards their intended receivers. We also assume that APs employ sensing mechanism-enabled MAC protocol to access the shared channel. Considering random orientation of antennas, presence of blockages and MAC layer protocol, we derive the intensity of APs that actively introduce interference and contribute to the interference power level at a ``typical" receiving node. Subsequently, using the derived intensity, we obtain the Laplace transform of the interference power at a generic receiver (separated with an arbitrary distance from its associated AP) and then calculate the bit error rate (BER) expression and validate it using Monte-Carlo simulations of the network.


%
%

\section{System Model} \label{sec:SysMod}
As shown in Fig.~\ref{fig:Sys_model}, we consider a network of APs, distributed over $\mathbb{R}^2$, based on Poisson point process $\Phi_{\rm P}$ with intensity $\lambda$. For each typical user, its dedicated AP is referred to as the \emph{serving} AP and the rest of the APs act as \emph{interfering} APs. It has also been assumed that APs are equipped with directional antennas, steered towards their intended receiver, in uniformly random directions. The radiation pattern of directional antennas can be approximated by a triangular pattern, that is $g(\phi){=}\frac{1}{\varphi }, \,\,\, \phi {\in} [-\frac{\varphi}{2},\frac{\varphi}{2}]$, where $\varphi$ is the signal beamwidth. Maintaining a reasonable accuracy, this assumption provides mathematical simplicity and tractability~\cite{TH2017joonas,Yazdani2017Dir}.

We further assume that CSMA/CA protocol is employed as carrier sensing mechanism for APs to access a shared channel. Based on CSMA/CA MAC protocol, each AP senses the shared channel (medium) continuously, if the medium is clear (i.e., no other devices in a certain area around the transmitter is using the medium), it proceeds to transmit; otherwise, it postpones the transmission to another clear time slot. Roughly speaking, this MAC protocol determines which APs are allowed to transmit simultaneously given the fact that they are not within a specific contention area of each other.
It is worth mentioning that a typical AP may not sense the other APs if they are out of its sensing range (distance-wise), or in the sensing range but transmit in different directions or their signal is blocked by blockages.

We also model the spatial distribution of blockages as a Poisson point process with parameter $\rho$~\cite{Bai2014Blockage}. Due to the presence of blockages in the environment, not all the potential interfering APs actively contribute to the interference power level at a specific receiver. In fact, some of them are considered as blocked APs, as their signal is blocked by the obstacles in the environment.
It is worth noting that an interfering AP may be considered as blocked AP relative to the typical AP or its associated user or both. Having said that, an interfering AP may not be sensed by the typical serving AP due to the presence of blockages in the path between them. However, it may have a clear path (without blockage) toward the associated user or vice versa; it can even be blocked for both and our model captures all of these scenarios.
Therefore, considering antennas directivity, blockage effect and MAC protocol, the density of the interfering APs that actively contribute to the interference level differs from the primary Poisson with intensity $\lambda$. In the following section, we first calculate the density of the active interfering APs and then derive the interference distribution and evaluate the network performance based on the BER metric.
\begin{figure}[t]
\centering
\includegraphics[scale=0.4]{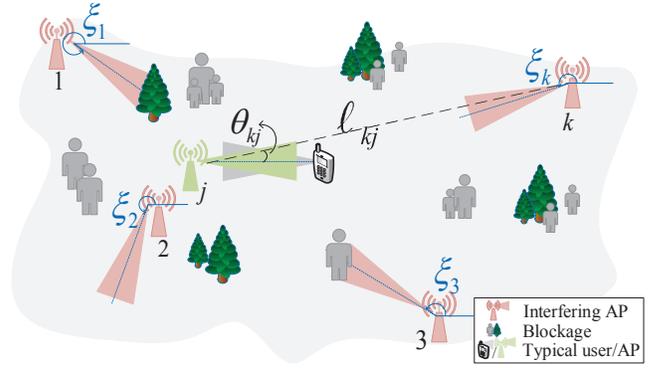}
\vspace{-0.5cm}
\caption{The impact of interferers on the victim receiver in the presence of obstacles (objects such as human bodies, trees and so on).}
\centering
\label{fig:Sys_model}
\end{figure}
\section{MAC Protocol Model, and Cross-layer Interference Analysis} \label{sec:ITbounds}
In order to calculate the density of the simultaneously interfering APs, we first describe the MAC protocol model considered in the cross-layer interference analysis.

\subsection{MAC protocol Model} \label{subsec:MAC_mdl}
We denote by $\Phi_{\rm M}$ the set of APs selected by MAC protocol to transmit simultaneously at a given time. In order to determine the set $\Phi_{\rm M}$ of the APs, each AP in the primary PPP process, $\Phi_{\rm P}$, is marked with a random number uniformly distributed in $[0,1]$. This mark abstracts the back-off timer~\footnote{If the medium is busy in the first transmission attempt, the node will wait for a random time before the second attempt. In fact, in each neighborhood of nodes, only node with the smallest back-off timer transmits.} in MAC protocol with collision avoidance (such as CSMA/CA) to prevent the collision. In addition, for $j^{\text{th}}$ AP, the set of neighbors is formally defined as APs in set $N_j$, such that
\begin{align} \label{eq:neighborSet}
\hspace{-0.26cm}{N_j}{=} \left\{ {\,k \in {\Phi _{P}}\left| {\,q_k\ell _{kj}^{ - \alpha }h_{kj}{\kern 1pt} g\left( {{\theta _{kj}}} \right){\kern 1pt} \breve{g}\left( {{\theta _{kj}} {+} \pi  {-} {\xi _k}} \right)z_k {>} \sigma } \right.\,} \right\}.
\end{align}
Here, $q_k$ is the transmit power of the $k^\text{th}$ AP. $\ell_{kj}$ represents the distance between the $j^{\text{th}}$ and $k^{\text{th}}$ APs. $\alpha$ and $h_{kj}$ denote the path loss exponent and squared fading gain of the generic Nakagami channel, with parameter $m$, from $k^\text{th}$ AP to the $j^\text{th}$ AP, respectively. In addition, $g\left({\theta _{kj}}\right)$ and $\breve{g}\left({{\theta _{kj}} + \pi  - {\xi _k}}\right)$ represent the antenna gain of $j^{\text{th}}$ and $k^{\text{th}}$ APs in the given directions in the arguments, respectively (see Fig.~\ref{fig:Sys_model}). ${\theta _{kj}}$ and ${\xi _k}$ denote the orientation of the $k^{\text{th}}$ AP with respect to $j^{\text{th}}$ AP and the boresight of $k^{\text{th}}$ AP antenna in its local orientation. Given the fact that $k^{\text{th}}$ interfering AP can be at any random location in the network with a random direction, we assume that ${\theta _{kj}}$ and ${\xi _k}$ are uniform random variables in $[-\pi,\pi]$. Moreover, $z_k$ is a random binary factor that determines whether the link between $j^{\text{th}}$ and $k^{\text{th}}$ APs is blocked by blockages; that is
\begin{align} \label{eq:z}
z_k = \left\{ \begin{array}{l}
1\,\,\,\,\,\,\,\,{\text{not-blocked with probability }} {p_k}\\
0\,\,\,\,\,\,\,\,\,{\text{blocked with probability }} 1-{p_k}.
\end{array} \right.
\end{align}
Based on~\eqref{eq:neighborSet}, the $k^{\text{th}}$ AP is considered as neighbor of $j^{\text{th}}$ AP, if the received power from $k^{\text{th}}$ AP is above carrier sensing threshold $\sigma$. Each transmitter competes only with its neighbors to access the shared medium, and the transmitter with the smallest back-off timer (smallest mark) proceeds to transmit, while others keep silent. Now, given the definition of the neighbor set in~\eqref{eq:neighborSet}, a generic AP belongs to the set $\Phi_M$ if and only if it has the lowest mark among its neighbors. This means that in each neighborhood~\footnote{For a generic AP, we use term \emph{neighborhood} to describe the area in which the APs in its neighbor-set are located.}, AP with the smallest mark transmits and the rest of the APs keep silent during the transmission. Therefore, the neighbors of $j^{\text{th}}$ AP reside within an arbitrary-shape area around it. We can bound the area by a disc with radius $r_{\text{cont}}$, where~\footnote{We drop the subscripts for notational simplicity, in the rest of the paper.} ${\left. {\Pr \left\{ {q\,r_{{\rm{cont}}}^{ - \alpha }\,h{\kern 1pt} \,g\left( \theta  \right){\kern 1pt} \breve{g} \left( {\theta  + \pi  - \xi } \right) \ge \sigma } \right\}} \right|_{\theta ,\xi }} \le \varepsilon $. Here, $\varepsilon$ is a small value. In fact, $r_{\text{cont}}$ is a sufficiently large distance beyond which the probability of an AP becoming a neighbor for an arbitrary AP $j$ is very negligible (smaller than $\varepsilon$). Therefore,
\begin{align}
{r_{{\rm{cont}}}} \le \frac{1}{ {4\pi ^2}}{\left( {\frac{{q{{\bar F}_h}^{-1}\left( \varepsilon  \right){\varphi ^{2\alpha  - 2}}}}{{\sigma}}} \right)^{\frac{1}{\alpha }}}.
\end{align}
Here, ${{\bar F}_h}^{-1}(.)$ is the inverse complementary cumulative distribution function (CCDF) of the squared fading gain of the channel. 
Determined by the MAC protocol, the average number of APs that concurrently transmit can be obtained by the following lemma.
\newtheorem{lemma}{Lemma}
\begin{lemma}\label{lem:lemma1}
Considering CSMA/CA protocol with sensing threshold $\sigma$, the average number of APs concurrently transmit, i.e., $\lambda_{\Phi_{\rm M}}$, is given by $\frac{{ \left( {1 - {{\rm{e}}^{ - \lambda A\eta }}} \right)}}{{\eta A}}$, where $A=\pi {r_{{\rm{cont}}}^2} $ is the contention area, and $\eta$ denotes the neighborhood success probability, defined in subsection~\ref{subsubsec:Neigh_SP}.
\end{lemma}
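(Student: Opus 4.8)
\emph{Proof sketch.} The plan is to recognize $\Phi_{\rm M}$ as a Mat\'ern-type hard-core thinning of the primary Poisson process $\Phi_{\rm P}$ and to evaluate its intensity by the classical retention-probability argument, the only twist being that the ``exclusion region'' is not a deterministic disc but a randomly thinned one whose mean area turns out to be $\eta A$. First I would appeal to the Slivnyak--Mecke theorem: it suffices to add a point at the origin $o$ carrying an independent $\mathrm{Uniform}[0,1]$ mark $t$, compute the probability $p(t)$ that this point survives the MAC thinning (i.e.\ that it has the smallest mark among its neighbors), and then write $\lambda_{\Phi_{\rm M}} = \lambda \int_0^1 p(t)\,\mathrm{d}t$ via Campbell's formula. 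Conditioned on the added point, the remaining APs still form a PPP of intensity $\lambda$, each independently equipped with a uniform mark and with the independent antenna orientations $\theta,\xi$, fading gain $h$ and blockage indicator $z$ of the model.

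The point at $o$ survives, given mark $t$, precisely when no AP of $\Phi_{\rm P}$ is simultaneously (i) a neighbor of $o$ in the sense of~\eqref{eq:neighborSet} and (ii) marked below $t$. Since the marks, orientations, fadings and blockage variables are mutually independent across APs, this set of ``relevant competitors'' is an independent thinning of $\Phi_{\rm P}$ in which an AP at location $x$ is retained with probability $t\, s(x)$, where $s(x)$ is the probability --- averaged over $h$, over $\theta,\xi \sim \mathrm{Uniform}[-\pi,\pi]$, and over the Bernoulli blockage variable of~\eqref{eq:z} --- that $x$ satisfies the neighbor inequality relative to $o$. By the marking and thinning theorems this competitor set is again Poisson; its total mean measure is $\lambda \int_{\mathbb{R}^2} s(x)\,\mathrm{d}x$, which by the construction of $r_{\rm cont}$ is essentially concentrated inside the contention disc of area $A=\pi r_{\rm cont}^2$ and equals $\lambda A \eta$, with $\eta$ the neighborhood success probability of subsection~\ref{subsubsec:Neigh_SP} (essentially the chance that an AP placed uniformly in the contention disc is effectively sensed once directionality and blockage are accounted for). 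The void probability of a Poisson process then yields $p(t) = \mathrm{e}^{-\lambda A \eta\, t}$.

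Integrating over the mark, $\int_0^1 \mathrm{e}^{-\lambda A \eta\, t}\,\mathrm{d}t = \frac{1-\mathrm{e}^{-\lambda A \eta}}{\lambda A \eta}$, so that $\lambda_{\Phi_{\rm M}} = \lambda \cdot \frac{1-\mathrm{e}^{-\lambda A \eta}}{\lambda A \eta} = \frac{1-\mathrm{e}^{-\lambda A \eta}}{\eta A}$, which is the claimed expression. I expect the substantive step --- everything else being bookkeeping --- to be the middle one: establishing that the neighbor-with-smaller-mark set is an independent Poisson thinning, which rests on the mutual independence of the per-AP marks, orientations, fadings and blockage indicators, and showing that its mean measure collapses exactly to $\lambda A \eta$, which in turn uses the bound defining $r_{\rm cont}$ to discard the contribution of $s(x)$ outside the contention disc and the definition of $\eta$ to absorb the directionality and blockage averaging.
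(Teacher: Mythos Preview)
Your argument is correct and is the standard Mat\'ern~II derivation: condition on the mark $t$ of the tagged point, observe that the neighbors with smaller mark form an independent Poisson thinning of mean $\lambda A\eta\,t$, use the void probability to get $p(t)=\mathrm{e}^{-\lambda A\eta t}$, and integrate over $t\in[0,1]$. This is not, however, how the paper does it. The paper works combinatorially: it conditions first on the number $|K|=n$ of APs falling in the contention disc (Poisson with mean $\lambda A$), then on the number $|N|=i$ of those that are actually sensed as neighbors (binomial$(n,\eta)$, by the independence of fading, orientation and blockage across APs), uses that the tagged point has the smallest mark among $i+1$ exchangeable marks with probability $1/(i+1)$, and finally evaluates the double sum $\sum_{i\ge 0}\sum_{n\ge i}\frac{1}{i+1}\frac{(\lambda A)^n e^{-\lambda A}}{n!}\binom{n}{i}\eta^i(1-\eta)^{n-i}$ in closed form. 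Your route is shorter and more conceptual, exploiting the Poisson void probability directly rather than unwinding the Poisson--binomial mixture; the paper's route is more elementary in that it never invokes Slivnyak--Mecke or the thinning theorem explicitly, only the i.i.d.\ structure and a series identity. Both arguments rely on exactly the same independence assumptions and on the definition of $r_{\rm cont}$ to truncate the contention region, so the substantive content is identical.
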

\begin{proof}
The density of concurrently transmitting APs, is the density of the primary $\Phi_{\rm P}$ thinned by the probability of retaining a generic AP in $\Phi_{\rm M}$. Moreover,
\begin{align} \notag
&\Pr \left\{ \text{retaining a generic AP in $\Phi_{\rm M}$} \right\}\\
&= \Pr \left\{ {\left. {{\text{retaining a generic AP in $ {\Phi _{\rm{M}}}$ }}} \right|{{\left| N \right|}} = i} \right\}\Pr \left\{ {{{\left| N \right|}} = i} \right\},
\end{align}
where $\Pr \left\{ {\left| N \right| = i} \right\}$ is probability of having $i$ neighbors; and
\begin{align} \notag
\Pr \left\{ {\left| N \right| = i} \right\} &= \sum\limits_n {\Pr \left\{ {\left| N \right| = i\,,\,\,\left| K \right| = n} \right\}}\\
&={\sum\limits_{n = 0}^\infty  {\frac{{{{\left( {\lambda A} \right)}^n}{{\rm{e}}^{ - \lambda A}}}}{{n!}} {\binom{n}{i}}{\eta ^i}{{\left( {1 - \eta } \right)}^{n - i}}} }.
\end{align}
Here, $ \left| K \right| $ denotes the number of AP in the contention area of a generic AP.
In fact, considering the PPP assumption of the distribution of APs, the probability of having $n$ APs in contention area $A$ around a generic AP is given by ${\frac{{{{\left( {\lambda A} \right)}^n}{{\rm{e}}^{ - \lambda A}}}}{{n!}}}$. However, given the pathloss, small scale fading, blockage effect and antenna directions, only $i$ out of $n$ APs are actually counted as neighbors, each of them with success probability $\eta$. In addition, with probability $\frac{1}{i+1}$ only one of them has the lowest mark and proceeds to transmit. Therefore,
\begin{align} \notag
&\Pr \left\{ \text{retaining a generic AP in $\Phi_{\rm M}$} \right\}\\ \notag
&\hspace{0.25cm}=\sum\limits_{i = 0}^n {\frac{1}{{i + 1}}\sum\limits_{n = 0}^\infty  {\frac{{{{\left( {\lambda A} \right)}^n}{{\rm{e}}^{ - \lambda A}}}}{{n!}}{\binom{n}{i}}{\eta^i}{{\left( {1 - \eta} \right)}^{n - i}}} }\\ \notag
&\hspace{0.25cm} = \sum\limits_{i = 0}^\infty  {\sum\limits_{n = i}^\infty  {\frac{{{{\rm{e}}^{ - \lambda A}}{\eta ^i}}}{{\left( {i + 1} \right)!}}{{\left( {1 - \eta } \right)}^{ - i}}\frac{{{{\left( {\lambda A} \right)}^n}}}{{n!}}\frac{{n!}}{{(n - i)!}}{{\left( {1 - \eta } \right)}^n}} } \\ \notag
&\hspace{0.25cm} = \frac{{1 - {{\rm{e}}^{ - \lambda A\eta}}}}{{ {\eta\lambda A} }}.
\end{align}
\end{proof}
%
\subsubsection{Neighborhood Success Probability} \label{subsubsec:Neigh_SP}
We define neighborhood success probability as
\begin{align} \notag
\eta  &= \int\limits_0^{{r_{{\rm{cont}}}}} {{\left. {{\rm{Pr}}\left\{ {q{\ell ^{ - \alpha }}h\,g\left( \theta  \right)\breve{g} \left( {\theta  + \pi  - \xi } \right)z > \sigma } \right\}} \right|_{\theta ,\xi }}{f_L}\left( \ell  \right){\rm{d}}\ell } \\
&\mathop  =  \,\frac{{{\varphi ^2}}}{{4{\pi ^2}}}\,\int\limits_0^{{r_{{\rm{cont}}}}} {{{\bar F}_h}\left( {\frac{{\sigma {\varphi ^2}{\ell ^\alpha }}}{{q\,}}} \right){f}\left( z  \right){f_L}\left( \ell  \right){\rm{d}}\ell } ,
\end{align}
where the distance distribution is
\begin{align} \label{eq:distance_dist}
{{f_L}\left( \ell  \right)}=\left\{ \begin{array}{l}
\frac{{2\ell}}{{{D^2}}}\,\,\,\,\,\,\,\,\,0 < \ell < D\\
0\,\,\,\,\,\,\,\,\,\,\,\,\,\text{elsewhere},
\end{array} \right.
\end{align}
in PPP network model. Moreover, ${f}\left( z  \right)$ is the distribution of the random variable $z$ in~\eqref{eq:z}.
\subsection{Blockage Model} \label{subsec:Blkg_mdl}
\begin{figure}[t]
\centering
\includegraphics[scale=0.5]{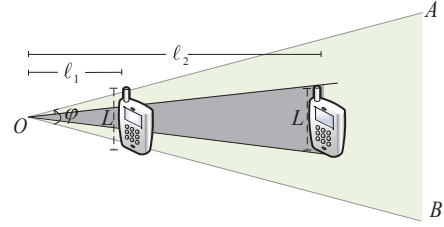}
\vspace{-0.5cm}
\caption{User$1$ (resp. User$2$) with distance $\ell_1$ (resp. $\ell_2$) to the transmitting AP. User$1$ (resp. User$2$) is blocked if at least one blockage intersect the triangle with area ${ {{\ell_1} ^2}\tan \left( \frac{\varphi}{2}  \right)}$ (resp. $\frac{\ell_2L}{2}$). Green area ($\bigtriangleup OAB$) represents the triangular beam pattern of the transmitting AP.}
\centering
\label{fig:Blockage}
\end{figure}
In order to calculate the blockage probability for $k^\text{th}$ AP, as mentioned previously, we approximate the radiation pattern of the antennas by a triangular-shaped pattern denoted by $C_k$, where its edges are determined by the signal beamwidth $\varphi$ (green triangular area in Fig.~\ref{fig:Blockage}). Considering the receiving terminal with average length $L$, smaller than blockage dimension~\footnote{Users' handheld units are smaller than blockages like human bodies, foliage, cars and so on.}, the signal from $k^\text{th}$ transmitter is blocked with at least one blockage in the line-of-sight path to the receiver.
%
%
Given the PPP assumption of the location of the blockages, for the $k^\text{th}$ AP, the probability of not being blocked (having line-of-sight) is given by (see Fig.~\ref{fig:Blockage})
\begin{align} \label{eq:p_k}
{p_k} {=} {{\rm{e}}^{ - \rho |{C_k}|}} {=} \left\{ \begin{array}{l}
{{\rm{e}}^{ - \rho {{\ell_k} ^2}\tan \left( \frac{\varphi}{2}  \right)}}\,\,\,\,\,\,\,\,\,0 \le \ell_k  < \frac{L}{{2\tan \left( \frac{\varphi}{2}  \right)}}\\
{{\rm{e}}^{ - \rho \frac{{\ell_k L}}{2}}}\,\,\,\,\,\,\,\,\,\,\,\,\,\,\,\,\,\,\,\,\,\,\,\,\,\,\ \ell_k  \ge \frac{L}{{2\tan \left( \frac{\varphi}{2}  \right)}},
\end{array} \right.
\end{align}
where $\ell_k$ is the distance from the $k^\text{th}$ AP to an arbitrary location from where its blockage probability is calculated. $\{\ell_1, \ell_2, ..., \ell_k,...\}$ is the sequence of distances of APs from an arbitrary location in the network, with distribution given by~\eqref{eq:distance_dist}.

\subsection{Interference Statistics} \label{subsec:Intf_mdl}

Given the intensity of the non-blocked APs that are concurrently allowed to transmit based on the MAC protocol in lemma~\ref{lem:lemma1}, we can derive the Laplace transform of the accumulated interference power at a typical receiving node in the network. Following from~\cite{Hdcore2011Haenggi}, the set of simultaneously transmitting APs can be safely approximated by a PPP. The intensity of the corresponding PPP is derived in lemma~\ref{lem:lemma1}.
\newtheorem{theorem}{Theorem}
\begin{theorem} \label{th:theorem}
The Laplace transform of the accumulated interference power, from the non-blocked APs that are concurrently allowed to transmit based on the MAC protocol, received at a typical receiving node, denoted by ${\mathcal{L}_{{I_{{\rm{agg}}}}}}\left( {\rm{s}} \right)$, is given by
\begin{align} \label{eq:MGF} \notag
&\hspace{-0.1cm}{\mathcal{L}_{{I_{{\rm{agg}}}}}}\left( {\rm{s}} \right) {=} \exp \Bigg\{  - \frac{{{\lambda _{{\Phi _M}}}{\varphi ^2}}}{{2\pi }}\bigg[ \frac{1}{{2\rho \tan \left( \varphi  \right)}} \\
&\hspace{1.1cm}+ \big( {\frac{4}{{{\rho ^2}{L^2}}} + \frac{1}{{2\rho \tan \left( \varphi  \right)}}} \big){e^{ - {\mkern 1mu} {\mkern 1mu} \frac{{\rho {L^2}}}{{4\tan \left( \varphi  \right)}}}} - {\kappa _m}\left( {\rm{s}} \right) \bigg] \Bigg\},
\end{align}
where
\begin{align} \label{eq:MGF_proof}\notag
&{\kappa _m}\left( {\rm{s}} \right) = \int\limits_0^{\frac{L}{{2\tan \left( \varphi  \right)}}}  {\ell {{\rm{e}}^{ - \rho {\ell ^2}\tan \left( \varphi  \right)}}{{\left( {1 + {\rm{s}}\frac{{q{\ell ^{ - \alpha }}}}{{m{\varphi ^2}}}} \right)}^{ - m}}}{\rm{d}}\ell  \\
&\hspace{2.5cm}- \int\limits_{\frac{L}{{2\tan \left( \varphi  \right)}}}^\infty  {{\ell {{\rm{e}}^{ - \rho \frac{{\ell L}}{2}}}{{\left( {1 + {\rm{s}}\frac{{q{\ell ^{ - \alpha }}}}{{m{\varphi ^2}}}} \right)}^{ - m}}}{\rm{d}}\ell }.
\end{align}
\end{theorem}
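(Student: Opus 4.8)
The plan is to model the set $\Phi_{\rm M}$ of simultaneously active APs as a Poisson point process on $\mathbb{R}^2$ with the intensity $\lambda_{\Phi_{\rm M}}$ obtained in Lemma~\ref{lem:lemma1} (the PPP approximation borrowed from~\cite{Hdcore2011Haenggi}), place the typical receiving node at the origin, and write the aggregate interference as $I_{\rm agg}=\sum_{k\in\Phi_{\rm M}} q\,\ell_k^{-\alpha}\,h_k\,g(\theta_k)\,\breve{g}(\theta_k+\pi-\xi_k)\,z_k$, where $\ell_k$ is the distance from the $k^{\rm th}$ interferer to the origin, $h_k$ is the unit-mean squared Nakagami-$m$ gain (so $h_k\sim\Gamma(m,1/m)$), $\theta_k,\xi_k$ are the i.i.d.\ orientation marks of~\eqref{eq:neighborSet}, and $z_k$ is the blockage indicator of~\eqref{eq:z}; all marks are mutually independent and independent of the point locations.

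First I would invoke the probability generating functional of the Poisson process, which gives
\[
\mathcal{L}_{I_{\rm agg}}({\rm s})=\exp\!\Big\{-\lambda_{\Phi_{\rm M}}\!\int_{\mathbb{R}^2}\!\big(1-\mathbb{E}\big[{\rm e}^{-{\rm s}\,q|x|^{-\alpha}h\,g(\theta)\breve{g}(\theta+\pi-\xi)z}\big]\big)\,{\rm d}x\Big\},
\]
the inner expectation being taken over the marks $h,\theta,\xi,z$ carried by a point at $x$. Passing to polar coordinates $x=(\ell,\phi)$, the integrand does not depend on $\phi$ once the uniform orientations are averaged out, so the $\phi$-integration simply contributes a factor $2\pi$.

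The crux is the mark expectation. Because the triangular pattern is flat at height $1/\varphi$ over a sector of width $\varphi$, the product $g(\theta)\breve{g}(\theta+\pi-\xi)$ is a two-valued random variable: it equals $1/\varphi^{2}$ precisely on the event that both beams point along the interfering link (of probability $(\varphi/2\pi)^2=\varphi^2/(4\pi^2)$, the same factor that enters $\eta$), and $0$ otherwise. Since the gain sits inside an exponential it cannot be replaced by its mean; instead I would condition on this alignment event, then average over $z$ (value $1$ with probability $p(\ell)$ of~\eqref{eq:p_k}, $0$ otherwise) and over the fading via $\mathbb{E}_h[{\rm e}^{-th}]=(1+t/m)^{-m}$ with $t={\rm s}\,q\ell^{-\alpha}/\varphi^2$, obtaining $1-\mathbb{E}[\cdot]=\tfrac{\varphi^2}{4\pi^2}\,p(\ell)\big(1-(1+{\rm s}\,q\ell^{-\alpha}/(m\varphi^2))^{-m}\big)$. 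Combining the resulting $\varphi^2/(4\pi^2)$ with the $2\pi$ from the angular integral yields the prefactor $\lambda_{\Phi_{\rm M}}\varphi^2/(2\pi)$ and reduces the exponent to $-\tfrac{\lambda_{\Phi_{\rm M}}\varphi^2}{2\pi}\int_0^\infty p(\ell)\big(1-(1+{\rm s}\,q\ell^{-\alpha}/(m\varphi^2))^{-m}\big)\ell\,{\rm d}\ell$.

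Finally I would split this radial integral into the ``$1$'' part $\int_0^\infty p(\ell)\,\ell\,{\rm d}\ell$ and the term $\kappa_m({\rm s})=\int_0^\infty p(\ell)(1+{\rm s}\,q\ell^{-\alpha}/(m\varphi^2))^{-m}\ell\,{\rm d}\ell$, in each case inserting the two-regime form of $p(\ell)$ from~\eqref{eq:p_k} and breaking the integration at $\ell=L/(2\tan\varphi)$. For the ``$1$'' part the near region gives the Gaussian-type integral $\int_0^{L/(2\tan\varphi)}\ell\,{\rm e}^{-\rho\ell^2\tan\varphi}{\rm d}\ell=\tfrac{1}{2\rho\tan\varphi}\big(1-{\rm e}^{-\rho L^2/(4\tan\varphi)}\big)$ and the far region gives $\int_{L/(2\tan\varphi)}^\infty\ell\,{\rm e}^{-\rho\ell L/2}{\rm d}\ell=\big(\tfrac{1}{\rho\tan\varphi}+\tfrac{4}{\rho^2L^2}\big){\rm e}^{-\rho L^2/(4\tan\varphi)}$; adding these two reproduces exactly the first two summands inside the bracket of~\eqref{eq:MGF}, while the same split applied to $\kappa_m({\rm s})$ produces the integrals appearing in~\eqref{eq:MGF_proof}, which have no elementary closed form and are kept as is. Substituting back into the exponential completes the proof. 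I expect the main obstacle to be this mark-expectation step: recognizing that randomly oriented flat-top beams turn the gain product into a Bernoulli-type factor that thins the interferer process by $\varphi^2/(4\pi^2)$ rather than a quantity one can average through the exponential, and being careful that the blockage probability entering the integral is the one for the interferer-to-typical-receiver link (so that $\ell$ in~\eqref{eq:p_k} is the interferer--receiver distance); the subsequent radial integrations are then routine.
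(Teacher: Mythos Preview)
Your proposal is correct and follows essentially the same route as the paper: the paper conditions on a Poisson number of interferers in a disk of radius $D$ and lets $D\to\infty$, which is nothing but a first-principles derivation of the PGFL identity you invoke directly, and the subsequent averaging over $(\theta,\xi)$, $z$, and $h$ in the paper's final display is exactly your Bernoulli-alignment/blockage-thinning/Nakagami-MGF computation. The only cosmetic difference is that you carry out the radial ``$1$'' integral and the near/far split explicitly, whereas the paper stops at ``after algebraic manipulation''; note also that your split of $\kappa_m({\rm s})$ naturally produces a \emph{sum} of the two regime integrals, so the minus sign in~\eqref{eq:MGF_proof} appears to be a typo in the statement rather than a flaw in your argument.
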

\begin{proof}
The Laplace transform of the accumulated interference power is defined as
\begin{align} \notag
{\mathcal{L}_{{I_{{\rm{agg}}}}}}\left( {\rm{s}} \right)\left| {_U} \right. &= {\mathbb{E}}\left[ {{{\rm{e}}^{{\rm{-s}}\sum\limits_{k = 1}^U {{q_k}\ell _{kj}^{ - \alpha }{h_{kj}}\,g\left( {{\theta _{kj}}} \right){\kern 1pt} \breve{g} \left( {{\theta _{kj}} + \pi  - {\xi _k}} \right){z_k}\left| {_{{\theta _{kj}},{\xi _k},{z_k}}} \right.} }}} \right]\\ \notag
& \mathop  = \limits^{(a)} \prod\limits_{k = 1}^U {{\mathbb{E}}\left[ {{{\rm{e}}^{{\rm{-s}}{q_k}\ell _{kj}^{ - \alpha }{h_{kj}}\,g\left( {{\theta _{kj}}} \right){\kern 1pt} \breve{g} \left( {{\theta _{kj}} + \pi  - {\xi _k}} \right){z_k}\left| {_{\theta_{kj} ,\xi_{k} ,z_k}} \right.}}} \right]} \\
& \mathop  = \limits^{(b)} {\left( {{\mathbb{E}}\big[ {{{\rm{e}}^{{\rm{-s}}{q}\ell^{ - \alpha }{h}\,g\left( {{\theta}} \right){\kern 1pt} \breve{g} \left( {{\theta} + \pi  - {\xi}} \right){z}\left| {_{\theta ,\xi ,z}} \right.}}} \big]} \right)^U}.
\end{align}
(a) follows from the independence of $\ell _{kj}$, $h_{kj}$, $\theta _{kj}$, $\xi _{k}$ and $z_k$. (b) utilizes the fact that the product of the sequence can be calculated with respect to the attributes of an arbitrary AP. In addition, for simplicity of notations, we drop the indices. In order to calculate~\eqref{eq:MGF}, we first consider a disk of radius $D$ and then take the limit as $D {\to} \infty$. Therefore, given the PPP assumption of the location of the APs,
\begin{align} \label{eq:Dinf}\notag
{\mathcal{L}_{{I_{{\rm{agg}}}}}}{(\rm{s})}&{=} \mathop {\lim }\limits_{D \to \infty } \sum\limits_{u = 0}^\infty  \frac{{{{\rm{e}}^{ - {\lambda _{{\Phi _M}}} \pi {D^2}}}{{\left( {{\lambda _{{\Phi _M}}} \pi {D^2}} \right)}^u}}}{{u!}}\\ \notag
&\hspace{1.5cm}\times{{\left( {{\mathbb{E}}\big[ {{{\rm{e}}^{{\rm{-s}}q{\ell ^{ - \alpha }}h\,{g} \left( \theta  \right)\,\breve{g}\left( {\theta  + \pi  - \xi } \right)z\left| {_{\theta ,\xi ,z}} \right.}}} \big]} \right)}^u}\\
&{=}\mathop {\lim }\limits_{D \to \infty }{{\rm{e}}^{ - {\lambda _{{\Phi _M}}} \pi {D^2}\left( {1 - {\mathbb{E}}\big[ {{{\rm{e}}^{{\rm{-s}}q{\ell ^{ - \alpha }}h\,{g} \left( \theta  \right)\,\breve{g}\left( {\theta  + \pi  - \xi } \right)z\left| {_{\theta ,\xi ,z}} \right.}}} \big]} \right)}}.
\end{align}
In order to calculate~\eqref{eq:Dinf}, we have
\begin{align} \notag
&\mathop {\lim }\limits_{D \to \infty } \,\,\,\, - {\lambda _{{\Phi _M}}} \pi {D^2}\left( {1 - {\mathbb{E}}\big[ {{{\rm{e}}^{{\rm{-s}}q{\ell ^{ - \alpha }}h\,{g} \left( \theta  \right)\,\breve{g}\left( {\theta  + \pi  - \xi } \right)z\left| {_{\theta ,\xi ,z}} \right.}}} \big]} \right)\\ \notag
&\hspace{0.5cm}= \mathop {\lim }\limits_{D \to \infty } \,\int\limits_{ - \frac{\varphi }{2}}^{\frac{\varphi }{2}} \int\limits_{\pi  + \theta  - \frac{\varphi }{2}}^{\pi  + \theta  + \frac{\varphi }{2}} \int\limits_0^\infty  \int\limits_0^D  - \lambda _{{\Phi _M}} \pi {D^2}( {1 - {{\rm{e}}^{{-\rm{s}}\frac{{q{\ell ^{ - \alpha }}h}}{{{\varphi ^2}}}\,}}} )\\
&\hspace{1cm}\times{{f}\left( z  \right)}{\frac{2\ell}{D^2}}\frac{1}{{{{4\pi} ^2}}}\frac{{{m^m}}}{{\Gamma \left( m \right)}}{h^{m - 1}}{{\rm{e}}^{ - mh}}{\rm{d}}h{\rm{d}}\theta {\rm{d}}\xi {\rm{d}}\ell.
\end{align}
After algebraic manipulation, we arrive at expression~\eqref{eq:MGF}.
\end{proof}
Following the approach in~\cite{Niknam2017Spatial-Spectral}, the Laplace transform of the aggregated interference power, derived in theorem~\ref{th:theorem}, is used to define the average BER expression,
\begin{align} \label{Average_BER} \notag
{\rm{BER}}_{\rm{ave}}&= \frac{1}{2} - \frac{{\sqrt c }}{\pi }\frac{{\Gamma (m + \frac{1}{2})}}{{\Gamma (m)}}\int_0^\infty  {\frac{{{{\kern 1pt} _1}{F_1}( m+\frac{1}{2};\frac{3}{2};-c{\rm{s}})}}{{\sqrt {\rm{s}} }}} \\
 &\hspace{1.25cm}\times {\mathcal{L}_{I_{\rm{agg}}}}\left( {-\frac{m}{{{q_0}\ell _0^{ - \alpha }}}{\rm{s}}} \right){{\rm{e}}^{-{\frac{{ m\sigma _n^2}}{{{q_0}\ell _0^{ - \alpha }}} } {\rm{s}}}}{\rm{ds}},
\end{align}
where $q_0$ and $\ell_0$ denote the transmit power and the distance from the typical user to its serving AP and $c$ is a constant that depends on the modulation type. In addition, ${\sigma_n}^2$ represents the power of the additive white Gaussian noise (AWGN). Moreover, ${{\kern 1pt} _1}{F_1}(; ;)$ and $\Gamma(.)$ are the confluent hypergeometric and gamma functions, respectively.
\section{Numerical Results} \label{sec:NumResults}
\begin{figure*}[!tbp]
 \centering
\begin{minipage}[b]{0.32\textwidth}
\centering
\includegraphics[width=6cm,height=4.5cm]{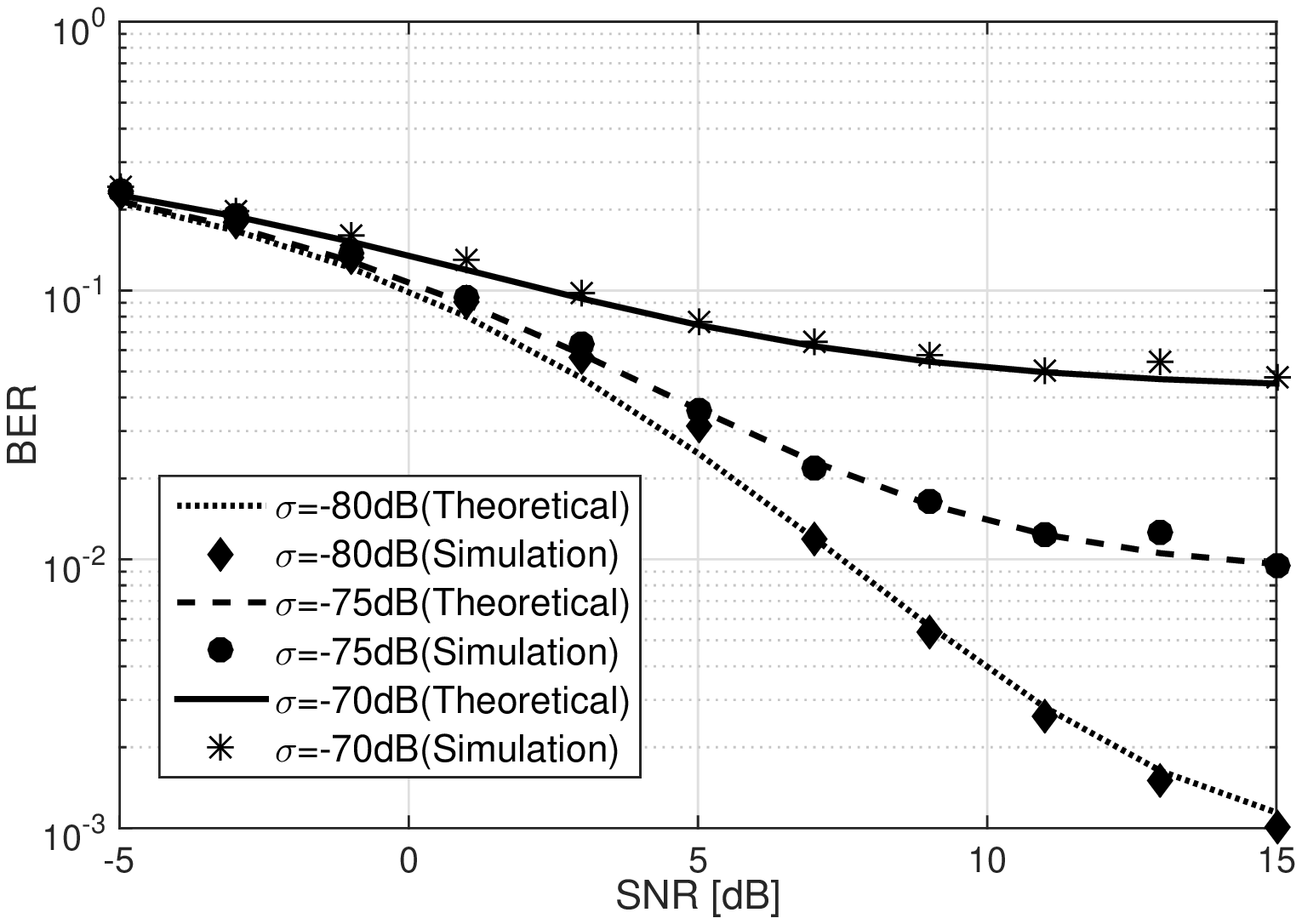}
\caption{BER Vs. SNR for different $\sigma$ values, $\lambda{=}10^{-1}$, $\rho{=}10^{-3}$.}
\label{fig:BER_SNR_sig}
 \vspace{-0.3cm}
   \end{minipage}
   \hfill
\begin{minipage}[b]{0.32\textwidth}
\centering
\includegraphics[width=6cm,height=4.5cm]{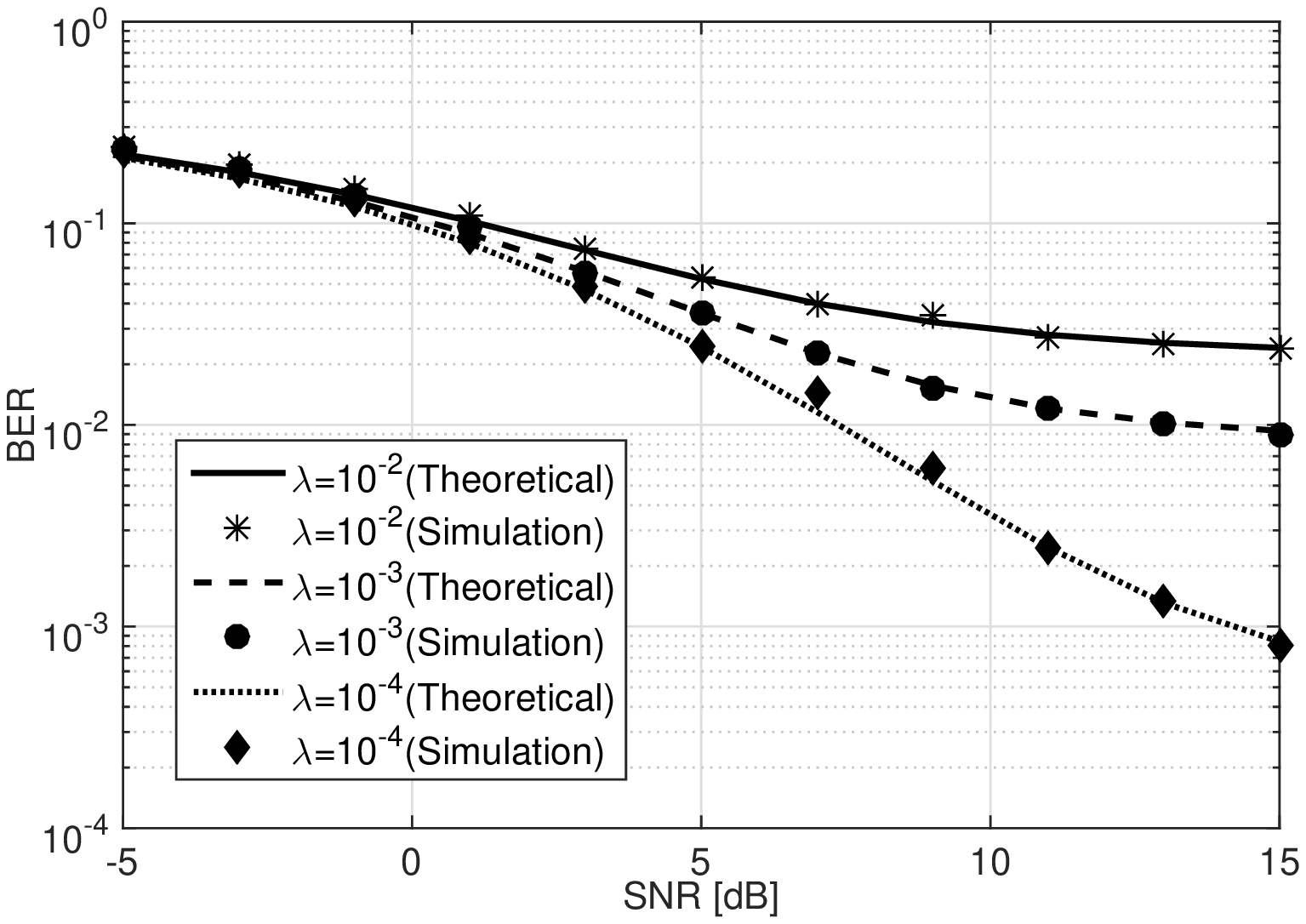}
\caption{BER Vs. SNR for different $\lambda$ values, $\rho{=}10^{-3}$, $\sigma{=}{-}60$ dB.}
\label{fig:BER_SNR_lambda}
 \vspace{-0.3cm}
   \end{minipage}
   \hfill
\begin{minipage}[b]{0.32\textwidth}
\centering
\includegraphics[width=6cm,height=4.5cm]{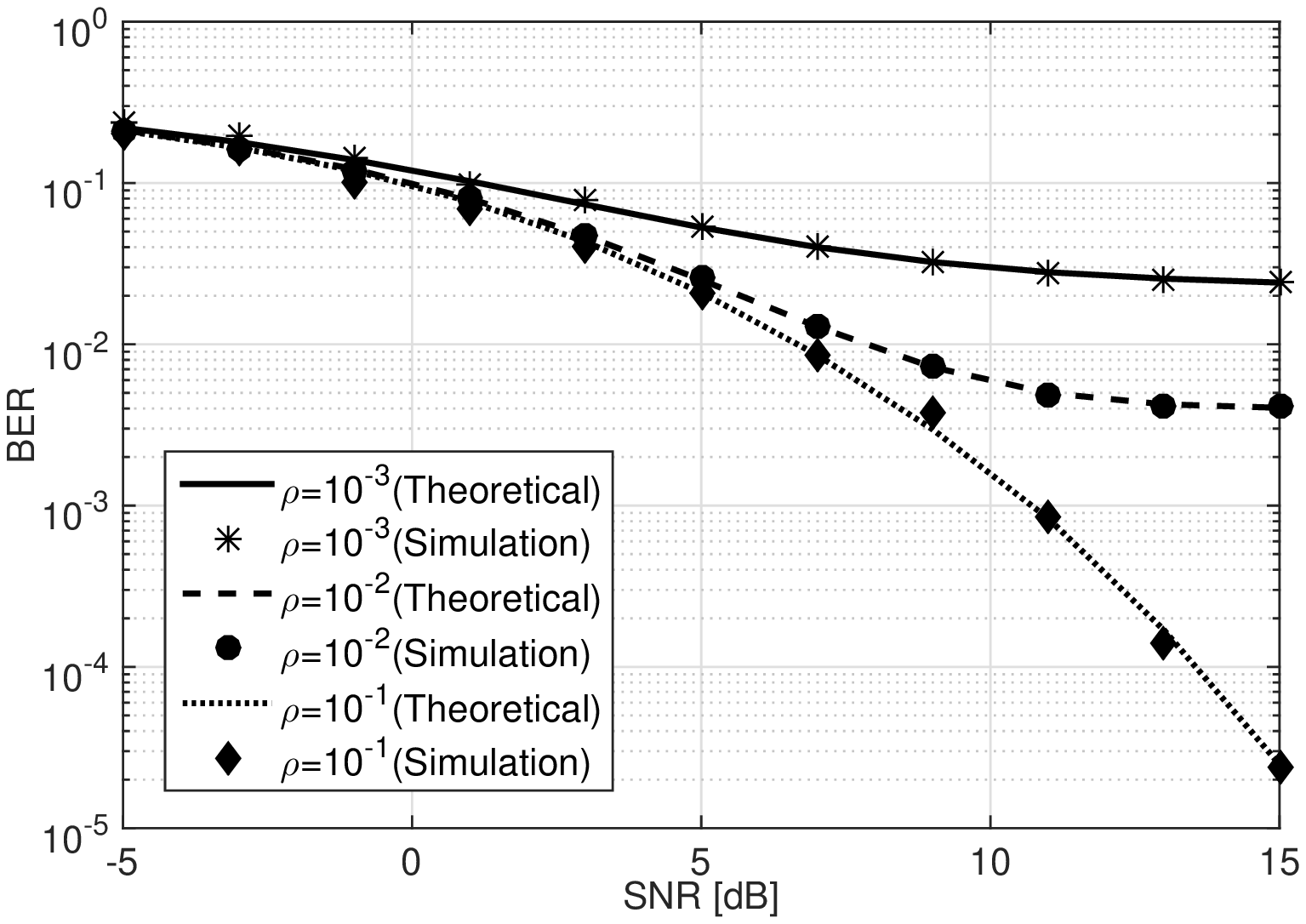}
\caption{BER Vs. SNR for different $\rho$ values, $\lambda{=}10^{-2}$, $\sigma{=}{-}60$ dB.}
\label{fig:BER_SNR_rho}
 \vspace{-0.3cm}
 \end{minipage}
\end{figure*}

This section provides numerical results to characterize the interference model. Monte-Carlo simulation validate the proposed model. We consider Nakagami-$m$ channel with shape factor $m=3$. Pathloss exponent $\alpha$ is set to $2.5$.  Here, the transmitted power of all interfering APs
are assumed to be the same and set to $30$ dBm. The beamwidth of the mmWave signals, i.e., $\varphi$, is assumed to be $15$ degrees. The distance between the serving AP and its intended receiver and the average receiver length $L$ are $5$m and $15$cm, respectively. Moreover, modulation parameter $c$ is $1$ (BPSK modulation).

Fig.~\ref{fig:BER_SNR_sig} represents the BER versus SNR curves for different carrier sensing threshold $\sigma$. As we can see, by increasing $\sigma$, the error performance increases. This is due to the fact that for larger $\sigma$ values the radius of the exclusion area around the transmitters $r_\text{cont}$ decreases. This means that when a typical transmitting node attempts to initiate the transmission and listen to the medium, it senses the signals from fewer number of other transmitting nodes. Therefore, higher number of nodes are allowed by MAC layer to transmit signals at the same time. These interfering nodes, on the other hand, can increase the interference power level at the receiver. This is an interesting result that shows, although mmWave signals can be easily blocked and attenuated, some level of carrier sensing is still needed in mmWave networks. In fact, in a dense network of APs, the interference power level can still be considerable. Therefore, in the average sense, sensing-based MAC protocols outperform the ALOHA-like ($\sigma \to \infty$) ones.

Fig.~\ref{fig:BER_SNR_lambda} shows the error performance for different primary APs density $\lambda$. As it is seen in Fig.~\ref{fig:BER_SNR_lambda}, increasing the primary density of the APs degrades the error performance. In fact, even with high directionality of mmWave signals and sensitivity to the obstacles, the density of the APs can not be carelessly increased to achieve higher data rate. The efficient number of APs, beyond which the error performance is higher than the desirable threshold, is captured in our model.

Fig.~\ref{fig:BER_SNR_rho} demonstrates the BER versus SNR for different blockage density $\rho$. With higher blockage density, larger number of interfering APs are blocked and hence the received interference power at the receiver decreases. In fact, considering beam directionality and hence sensitivity to blockages at mmWave frequencies, we might be able to design and operate a denser network of APs as captured by our model.
\section{Conclusion} \label{sec:Conclusion}
In this paper, we propose a cross-layer interference model for 5G mmWave networks. The derived model captures mmWave beam directionality, sensitivity to blockages and also MAC layer constraints. We propose a blockage model to account for the effect of obstacles in the environment. In addition, MAC protocol is considered in deriving the number of simultaneously transmitting APs. Subsequently, the Laplace transform of the power of the received interference at a typical user is derived and the error performance in terms of BER metric is evaluated and validated using simulations.


\bibliographystyle{IEEEtran}
\bibliography{IEEEabrv,GBbibfile}

\end{document}